\def\fract#1/#2{\leavevmode
 \kern.1em \raise .5ex \hbox{\the\scriptfont0 #1}%
 \kern-.1em $/$%
 \kern-.15em \lower .25ex \hbox{\the\scriptfont0 #2}%
}
\def\abs#1{\ensuremath{\lvert #1\rvert}}
\def\norm#1{\ensuremath{\lVert #1\rVert}}
\newcommand{\nat}{\mathbb N}
\newcommand{\tuple}[1]{\langle #1 \rangle}
\newcommand{\dist}{{\cal D}}
\newcommand{\Supp}{{\sf Supp}}
\newcommand{\q}{\hat{ q}}
\newcommand{\A}{{\cal A}}
\newcommand{\B}{{\cal B}}
\newcommand{\C}{{\cal C}}
\newcommand{\LW}{{\cal L}_W}
\newcommand{\Post}{{\sf Post}}
\newcommand{\dollar}[1][]{\symbol{36}}
\DeclareRobustCommand\sfrac[1]{\@ifnextchar/{\@sfrac{#1}}%
                                            {\@sfrac{#1}/}}
\def\@sfrac#1/#2{\leavevmode\scalebox{.9}{\kern.1em\raise.5ex
         \hbox{$\m@th\mbox{\fontsize\sf@size\z@
                           \selectfont#1}$}\kern-.1em
         /\kern-.15em\lower.25ex
          \hbox{$\m@th\mbox{\fontsize\sf@size\z@
                            \selectfont#2}$}}}
\DeclareRobustCommand\numfrac[1]{\@ifnextchar/{\@numfrac{#1}}%
                                            {\@numfrac{#1}}}
\def\@numfrac#1{\leavevmode \hbox{$\m@th\mbox{\fontsize\sf@size\z@
                           \selectfont#1}$}}
\title{Infinite Synchronizing Words for Probabilistic Automata (Erratum)}
\author{Laurent Doyen \inst{1}
\and Thierry Massart \inst{2} \and
Mahsa Shirmohammadi \inst{2} }
\institute{LSV, ENS Cachan \& CNRS, France \\ \email{doyen@lsv.ens-cachan.fr}
\and Universit\'e Libre de Bruxelles,
Brussels, Belgium \thanks{This work has been partly supported by the Belgian Fond National de la Recherche
Scientifique (FNRS).}\\ \email{\quad thierry.massart@ulb.ac.be
\quad\qquad mahsa.shirmohammadi@ulb.ac.be} }
\begin{document}
\sloppy \maketitle \pagestyle{plain}
\begin{abstract}
In~\cite{DMS11b}, we introduced the weakly synchronizing languages for probabilistic automata. In this report, we show that the emptiness problem of weakly synchronizing languages for probabilistic automata is undecidable. 
This implies that the decidability result of~\cite{DMS11b,tech,DMS11} for
the emptiness problem of weakly synchronizing language is incorrect.
\end{abstract}
\section{Definitions.}
We present the main notations and definitions. We refer to \cite{DMS11b} for detailed preliminaries.

A \emph{probability distribution} over a finite set~$S$ is a
function $d : S \rightarrow [0, 1]$ such that $\sum_{s \in S}
d(s)= 1$. The \emph{support} of~$d$ is the set $\Supp(d) = \{s \in
S \mid d(s) > 0\}$. We denote by $\dist(S)$ the set of all
probability distributions over~$S$.
Given a finite alphabet $\Sigma$, we denote by $\Sigma^{*}$ the set of
all finite words and by $\Sigma^{\omega}$ the set of
all infinite words over $\Sigma$.  The length of a finite word $w$ is denoted
by $\abs{w}$. 

\paragraph{{\bf Probabilistic Automata.}} 
A \emph{probabilistic automaton} (PA) $\A = \tuple{Q,\mu_{0},\Sigma,\delta}$ consists of a finite set $Q$ of
states, an initial probability distribution $\mu_0 \in \dist(Q)$,
a finite alphabet $\Sigma$, and a probabilistic transition
function $\delta: Q \times \Sigma \to \dist(Q)$. In a state $q \in
Q$, the probability to go to a state $q' \in Q$ after reading a
letter $\sigma \in \Sigma$ is $\delta(q,\sigma)(q')$. We define  $\Post_{\A}(q,\sigma) = \Supp(\delta(q,\sigma))$, and for  sets $s\subseteq Q$ and $\Sigma' \subseteq \Sigma$, let $\Post_{\A}(s,\Sigma')= \bigcup_{q \in s} \bigcup_{\sigma \in \Sigma'}  \Post_{\A}(q,\sigma)$.
The \textit{outcome} of $\A$ on an infinite   word $w = \sigma_0 \sigma_1
\cdots$ is the infinite sequence $\A^{w}_{0} \A^{w}_{1} \dots $ of
probability distributions $\A^{w}_i \in \dist(Q)$ such that $\A^{w}_{0} =
\mu_{0}$ is the initial distribution, and for all $n > 0$ and $q
\in Q$,
$$ \textstyle \A^{w}_{n}(q) = \sum_{q' \in Q} \A^{w}_{n-1}(q') \cdot \delta(q',\sigma_{n-1})(q)$$
The \emph{norm} of a probability distribution~$X$ over~$Q$ is
$\norm{X} = \max_{q \in Q} X(q)$.
\paragraph{{\bf Weakly Synchronizing Language for PAs.}} 
An infinite word~$w$ is said to be \emph{weakly synchronizing} for PA $\A$,
 if
$$ \limsup_{n \to \infty} \; \norm{\A^{w}_{n}} = 1.$$
We denote by  $\LW(\A)$ the set of all  weakly synchronizing words, named weakly synchronizing language,  of $\A$. Given a PA $\A$, the \textit{emptiness problem}  of weakly synchronizing language asks whether   $\LW(\A)=\emptyset$. 

\section{The emptiness problem  of weakly synchronizing languages for PAs is undecidable.}

Theorem~\ref{teo:weakly-undecide} states that the emptiness problem  of weakly synchronizing languages for PAs is undecidable. To show that, we present a reduction from \textit{the value 1 problem for PAs} which is undecidable~\cite{GO10}, to our problem.

\paragraph{{\bf The value 1 problem for PAs. }}
Let $\A=\tuple{Q, \mu_0,\Sigma,\delta}$ be a PA
with a single initial state $q_0$ where $\mu_0(q_0)=1$ and a set of
accepting states $F \subseteq Q$. The value 1 problem considers finite words:
the computation of  $\A$ on the  word $w=\sigma_0
\sigma_1 \cdots \sigma_{n-1}$ is the sequence $\A^{w}_0 \A^{w}_1 \cdots \A^{w}_n$
where $\A^{w}_0=\mu_0$ and $\A^{w}_{i+1}(q)=\sum_{q'\in
  Q}\A^{w}_{i}(q')\delta(q',\sigma_{i})(q)$ for all $0 \leq i < n $.  The
{\em{acceptance probability of $w$ by $\A$}} is given by
$P_{\A}(w)=\sum_{q\in F}\A^{w}_n(q)$. The value of $\A$, denoted $val(\A)$, is
the supremum acceptance probability $val(\A)=\sup_{w \in
  \Sigma^{*}}P_{\A}(w)$. Given  a  PA~$\A$, the \textit{value 1 problem} asks 
whether $val(\A)=1$. It is equivalent to check if there are some
words accepted by $\A$ with probability arbitrarily close to 1.  
\begin{theorem}\label{teo:val1}
The value 1 problem for probabilistic automata is undecidable~\cite{GO10}.
\end{theorem}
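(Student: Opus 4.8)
The plan is to recall the reduction of Gimbert and Oualhadj, since Theorem~\ref{teo:val1} is exactly their result and the present paper only needs to invoke it. I would start from a threshold problem that has been known to be undecidable since Paz and Bertoni: given a PA $\B=\tuple{Q,q_0,\Sigma,\delta}$ with a set $F$ of accepting states, it is undecidable whether there is a finite word $w\in\Sigma^{*}$ with $P_\B(w)>\frac12$ (this already holds for the fixed cut-point $\frac12$). From such a $\B$ I would build a PA $\A$ over the alphabet $\Sigma\uplus\{\#\}$, with a single initial state and two absorbing states $\top$ (accepting) and $\bot$ (rejecting), so that $val(\A)=1$ if and only if some word is accepted by $\B$ with probability strictly above $\frac12$. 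That equivalence is the whole reduction.

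The automaton $\A$ would work in rounds: on a maximal block $u\in\Sigma^{*}$ it simulates a \emph{fresh} run of $\B$ from $q_0$, and when it reads $\#$ it performs a probabilistic \emph{test} that succeeds with probability exactly $P_\B(u)$ --- branch on whether the simulated $\B$-state lies in $F$ --- and feeds the one-bit outcome into an internal biased random walk whose two absorbing ends are $\top$ and $\bot$; a run is accepting iff it ends in $\top$. On the input $w\#w\#\cdots\#w$ the rounds then produce independent Bernoulli$(p)$ steps with $p=P_\B(w)$, so the walk has drift $2p-1$. If $p>\frac12$, a gambler's-ruin estimate shows that, with the amplifying gadget described below, the probability of reaching $\top$ before $\bot$ can be pushed arbitrarily close to $1$ by taking longer inputs, hence $val(\A)=1$. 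Conversely, if every word satisfies $P_\B(w)\le\frac12$, then \emph{every} round has success probability at most $\frac12$, no matter which block word is used and no matter how the adversary places the separators, so the walk is a supermartingale and optional stopping bounds the probability of ever reaching $\top$ by $\frac12$, uniformly over all inputs; thus $val(\A)\le\frac12<1$.

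The step I expect to be the real obstacle --- and the technical heart of~\cite{GO10} --- is getting \emph{unbounded} amplification out of a \emph{fixed, finite} automaton: a random walk on $\{0,\dots,N\}$ with $N$ hard-wired yields only a fixed acceptance probability $<1$ even when $p>\frac12$. The way around it is to store the position of the walk --- equivalently, an unbounded counter --- not in the control states but in the \emph{continuous probability mass} of the current distribution, the standard device in undecidability proofs for probabilistic automata, so that each extra round nudges the encoded position and longer inputs drive it arbitrarily far toward $\top$; one then has to check that the affine ``move the counter'' updates are realised by genuine stochastic matrices and that the acceptance probability, as a function of the number of rounds, is continuous, monotone, and tends to $1$ exactly when $p>\frac12$. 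Two further points need care but are routine once the gadget is fixed: the boundary case $p=\frac12$ (an unbiased walk, acceptance probability $\to\frac12$), and adversarial uses of $\#$ such as empty blocks or a word not ending in $\#$ --- both are neutralised by making $\bot$ the default target of any failed or malformed test, so they can only help the ``$\le\frac12$'' side. Since we only \emph{use} the theorem here, it is enough to appeal to~\cite{GO10}; the above is the shape of their argument.
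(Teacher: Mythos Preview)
The paper gives no proof of Theorem~\ref{teo:val1} at all: it is stated with the citation~\cite{GO10} and used as a black box in the reduction that follows. Your proposal correctly recognises this (``it is enough to appeal to~\cite{GO10}''), so in that sense you match the paper exactly.

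Everything beyond that sentence is additional material not present in the paper. As a sketch of the Gimbert--Oualhadj argument it is in the right spirit: the reduction is indeed from the strict-threshold (cut-point $\tfrac12$) emptiness problem, the constructed automaton does work in $\#$-separated rounds each simulating a fresh run of $\B$, and the crux is an amplification gadget whose ``position'' lives in the probability mass rather than in a bounded set of control states, with a martingale/gambler's-ruin style bound for the $p\le\tfrac12$ direction. Two caveats if you intend to flesh this out rather than just cite: first, the actual gadget in~\cite{GO10} is not literally a random walk on $\{0,\dots,N\}$ encoded in mass but a specific two- or three-state stochastic update whose fixed point depends on $p$, so your description of ``affine counter moves realised by stochastic matrices'' would need to be tied to their concrete matrices rather than left abstract; second, your supermartingale bound ``probability of ever reaching $\top$ is at most $\tfrac12$'' is the right conclusion but requires identifying the right potential function for their gadget, which is not quite the naive walk position. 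None of this affects the present paper, which only needs the statement.
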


\subsection*{Undecidability result.}

\begin{theorem}\label{teo:weakly-undecide}
The emptiness problem of weakly synchronizing languages for
probabilistic automata is undecidable.
\end{theorem}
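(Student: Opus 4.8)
The plan is to reduce the value~$1$ problem, undecidable by Theorem~\ref{teo:val1}, to the emptiness problem for weakly synchronizing languages. Given a PA $\A=\tuple{Q,\mu_0,\Sigma,\delta}$ with single initial state $q_0$ and accepting set $F$ (which we may assume does not contain $q_0$), I would build a PA $\B$ over the alphabet $\Sigma\cup\{\#\}$, with $\#$ a fresh ``checkpoint'' letter, so that $\LW(\B)\neq\emptyset$ iff $val(\A)=1$. The idea is that $\B$ simulates runs of $\A$; reading $\#$ pushes exactly the current accepting mass into a fresh state $\q$ --- so the norm at $\q$ is the acceptance probability reached so far --- and recycles the non-accepting mass back towards the initial state; and $\q$ is kept transient, so probability cannot pile up at it over successive checkpoints. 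A word then weakly synchronizes $\B$ exactly when it can make $\A$-runs accept with probability tending to~$1$.

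Concretely, I would take $\B$ with states $Q\cup\{p_0,\q\}$ ($p_0,\q$ fresh and initial distribution concentrated on $q_0$), where on a letter of $\Sigma$ each $q\in Q$ behaves as in $\A$ and both $p_0$ and $\q$ behave like $q_0$, and where on $\#$ every state of $F$ moves to $\q$ while every other state (including $p_0$ and $\q$) moves to the uniform distribution on $\{q_0,p_0\}$. For the easy direction, assuming $val(\A)=1$, fix finite words $u_1,u_2,\dots$ with $P_{\A}(u_i)\to1$ and read $w=u_1\#\,u_2\#\,u_3\cdots$ in $\B$. The first letter of each block $u_{i+1}$ sends all of $q_0,p_0,\q$ to $\delta(q_0,\cdot)$, so every block is a genuine $\A$-run started from $q_0$; consequently, just after the $i$-th $\#$ the distribution carries mass $\geq P_{\A}(u_i)$ on $\q$. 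Therefore $\limsup_n\norm{\B^{w}_{n}}\geq\limsup_i P_{\A}(u_i)=1$, so $w\in\LW(\B)$.

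For the converse I would prove that if $val(\A)<1$ then $\norm{\B^{w}_{n}}$ stays below a fixed constant $<1$ for all sufficiently large $n$ and every infinite word $w$, hence $\LW(\B)=\emptyset$. Split $w$ into maximal $\Sigma$-blocks separated by $\#$'s. Just after a $\#$ the distribution is supported on $\{q_0,p_0,\q\}$, with mass on $\q$ equal to the accepting probability just before (hence $\leq val(\A)$) and at most $\frac12$ on each of $q_0,p_0$. Inside a $\Sigma$-block the distribution equals the one reached by running the block $u$ read so far from $q_0$ in $\A$; its total mass on $F$ is $P_{\A}(u)\le val(\A)$, and --- this is the point --- its mass on any single non-accepting state is bounded away from~$1$. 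The maximum of these bounds is the desired constant.

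I expect the main obstacle to be precisely this last point of the converse: ruling out that $\B$ synchronizes in any way other than the intended one. Degenerate words such as $\#^\omega$ are killed by spreading the recycled mass over two states, and accumulation of tiny accepting mass over infinitely many checkpoints is killed by keeping $\q$ transient; what remains delicate is that $\A$ itself may contain a non-accepting state --- a rejecting sink, an almost-absorbing state, or even $q_0$ under an ``idle'' letter --- onto which some word loads probability close to~$1$, which would make $\B$ synchronize there although $val(\A)<1$. To handle this I would first transform the value~$1$ instance $\A$ into a normal form in which no non-accepting state can carry mass arbitrarily close to~$1$, and I would have to do so in a value-preserving way: neither dropping the value below~$1$ nor inflating it to~$1$. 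Establishing that such a normalization exists, and meshes with the gadget above, is the technical crux of the reduction.
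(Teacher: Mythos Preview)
Your reduction from the value~$1$ problem via a reset letter $\#$ is exactly the paper's strategy, and the forward direction is fine. You have also put your finger on the real obstacle in the converse: nothing in your gadget prevents $\A$ itself from concentrating mass on a single non-accepting state during a $\Sigma$-block (if, say, $\delta(q_0,\sigma)(q_0)=1$ in $\A$, then $\sigma^{\omega}$ is weakly synchronizing for your automaton regardless of $val(\A)$). But you stop there, merely postulating that a value-preserving ``normalization'' of $\A$ can be found; as written the argument is incomplete precisely at the point you yourself call the crux, and it is not clear that such a preprocessing of $\A$ alone --- without touching the gadget --- can be made to work.

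The paper closes this gap not by normalizing $\A$ beforehand but by building the fix into the target automaton: every state $q$ of $\A$ except a distinguished accepting state $q_f$ receives a \emph{twin} $\hat q$, and every transition into $q$ is split half--half between $q$ and $\hat q$ (symmetrically out of $\hat q$). An easy induction gives $\C^{w}_{n}(q)=\C^{w}_{n}(\hat q)=\tfrac{1}{2}\,\A^{w}_{n}(q)$ for all $q\neq q_f$, so no such state ever carries mass above $\tfrac{1}{2}$; the norm can exceed $\tfrac{1}{2}$ only at $q_f$. To ensure there \emph{is} a single accepting state that one may leave un-twinned, and that it is reachable only through a dedicated letter $\dollar$, the paper first adjoins $q_f$ and a rejecting sink $q_n$, sending $F$ to $q_f$ and $Q\setminus F$ to $q_n$ on~$\dollar$; this preserves whether the value is~$1$. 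With the twin trick in place your $\{q_0,p_0\}$ split becomes unnecessary --- the pair $q_0,\hat q_0$ already plays that role --- and the converse goes through cleanly: a weakly synchronizing word must eventually put mass close to~$1$ on $q_f$, and the subword read since the last~$\#$ then witnesses acceptance probability close to~$1$ in $\A$.
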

\begin{proof}
We present a proof  using a reduction from the value 1 problem for PAs.
Given a PA~$\B=\tuple{Q_{\B}, (\mu_0)_{\B},\Sigma_{\B},\delta_{\B}}$ equipped with the single initial state $q_0$ and accepting states $F_{\B}\subseteq Q_{\B}$, we construct another PA~$\C=\tuple{Q_{\C}, (\mu_0)_{\C},\Sigma_{\C},\delta_{\C}}$, such that $val(\B)=1$ iff $\LW(\C)\neq \emptyset$.

First, from the PA~$\B$ we construct another PA~$\A=\tuple{Q, \mu_0,\Sigma,\delta}$ such  that $\B$
has value 1 iff $\A$ does. 
The state space is extended, by adding two new states $q_f$ and $q_n$ ($Q=Q_{\B} \cup \{q_f,q_n\}$).
The alphabet  $\Sigma=\Sigma_{\B} \cup \{ \dollar  \}$ where  $\dollar \not \in \Sigma_{\B}$.
The initial state $q_0$ is in common, but only $q_f$ is an accepting state  ($F=\{q_f\}$). 
The transition function globally remains unchanged;  only the following transitions are added in $\A$: $\delta(q,\dollar)(q_f)=1$ for all $q \in F_1$, and   $\delta(q,\dollar)(q_n)=1$ for all $q \in Q_{\B}\setminus F_1$. 
In addition, $\delta(q_f,\sigma)(q_n)=\delta(q_n,\sigma)(q_n)=1$ for all $\sigma \in \Sigma$. \figurename~\ref{fig:oneaccepting} illustrates the definition of $\A$ from $\B$. By construction,  since $q_f$ is the only accepting state and is only reachable with  $\dollar$, and  $\Post_{\A}(\{q_f,q_n\},\Sigma)=\{q_n\}$.
We see that $P_{\A}(w)=P_{\B}(v)$ if $w=v \dollar$ with $v \in \Sigma_{\B}^{*}$, otherwise $P_{\A}(w)=0$. Thus, $A$ has value 1 iff $\B$ has value 1.

Then from the PA $\A$, we construct a PA $\C=\tuple{Q_{\C}, (\mu_0)_{\C},\Sigma_{\C},\delta_{\C}}$ such that $\LW(\C)$, the weakly synchronizing language of $\C$ is not empty,  iff $\A$ has value 1. 
For each state  $q \in Q$ except $q_f$,  a twin state $\q$ is added to the state space. Thus, $Q_{\C}=Q\cup \{\q \mid q\in Q\setminus \{q_f\}\}$. 
The alphabet  $\Sigma_{\C}=\Sigma \cup \{ \#  \}$ where  $\# \not \in \Sigma$. The initial distribution is $(\mu_0)_{\C}(q_0)=(\mu_0)_{\C}(\q_0)=\frac{1}{2}$.
The probabilistic transitions function is defined as follows. 
\begin{itemize}
\item $\delta_{\C}(q_1,\sigma)(q_2)=\delta_{\C}(q_1,\sigma)(\q_2)=\delta_{\C}(\q_1,\sigma)(q_2)=\delta_{\C}(\q_1,\sigma)(\q_2)=\frac{1}{2} ~ \delta(q_1,\sigma)(q_2)$ 
for all states $q_1,q_2 \in Q\setminus\{q_f\}$ and all  $\sigma \in \Sigma$.

\item     $\delta_{\C}(q,\sigma)(q_f)=\delta_{\C}(\q,\sigma)(q_f)=\delta(q,\sigma)(q_f)$ for all states $q \in Q\setminus\{q_f\}$. 
\item  $\delta_{\C}(q_f,\sigma)(q_1)=\delta_{\C}(q_f,\sigma)(\q_1)=\frac{1}{2}~\delta(q_f,\sigma)(q_1)$ for all states $q \in Q\setminus\{q_f\}$ and all  $\sigma \in \Sigma$.
\item In addition, $\delta_{\C}(q,\#)(q_0)=\delta_{\C}(q,\#)(\q_0)=\frac{1}{2}$ for all $q \in Q_{\C}$.
\end{itemize}

\figurename~\ref{fig:undec} shows the construction. 
For convenience, a twin $q$ and $\q$ are drawn in an oval; \figurename~\ref{fig:twin} illustrates the transitions between two pairs of twin states where each pair is replaced with an oval.  
Intuitively, the PA $\C$ mimics the behavior of $\A$ where each state $q$ (except the accepting state $q_f$) shares the probability  to be in $q$ with the twin $\q$, in all steps $n\in \nat$. Moreover, each $\#$ ``resets''  
$\C$. 

\begin{figure}[t]
\begin{minipage}[b]{0.45\linewidth}
\centering
\begin{picture}(45,35)

\node[Nmarks=n, Nw=33, Nh=24, dash={0.2 0.5}0](pa1)(13,15){}
\node[Nframe=n](label)(3,23){PA $\B$}
\node[Nmarks=n, Nw=54, Nh=35, dash={0.4 1}0](pa)(22,15){}
\node[Nframe=n](label)(3,30){PA $\A$}

\node[Nmarks=i](n3)(1,16){$q_0$}
\node[Nmarks=r,dash={2 0.5}0](n4)(25,21){}
\node[Nmarks=n](n5)(25,9){}
\node[Nmarks=r](qf)(42,21){$q_f$}
\node[Nmarks=n](qn)(42,9){$q_n$} 

\drawloop[ELside=l,loopCW=y, loopangle=-90, loopdiam=4](qn){$\Sigma$}
\drawedge(qf,qn){$\Sigma$}
\drawedge(n4,qf){$\dollar$}
\drawedge(n5,qn){$\dollar$}

\end{picture}
\caption{The definition of the PA $\A$ from~$\B$.}\label{fig:oneaccepting}
\end{minipage}
\hspace{0.5cm}
\begin{minipage}[b]{0.45\linewidth}
\centering

\begin{picture}(45,30)(0,2)

\node[Nmarks=n](n1)(0,5){$\q_1$}
\node[Nmarks=n](n2)(0,20){$q_1$}

\node[Nmarks=n](n3)(20,5){$\q_2$}
\node[Nmarks=n](n4)(20,20){$q_2$}

\drawedge(n1,n3){$\sigma$}
\drawedge[ELpos=50, ELside=l, curvedepth=2](n2,n3){$\sigma$}

\drawedge[ELpos=50, ELside=l, curvedepth=-2](n1,n4){$\sigma$}
\drawedge(n2,n4){$\sigma$}

\node[Nframe=n](arrow)(27,12){{\Large $\Rightarrow$}}

\node[Nmarks=n, Nw=5, Nh=9](n5)(34,12){$\ell_1$}
\node[Nmarks=n, Nw=5, Nh=9](n6)(44,12){$\ell_2$}
\drawedge(n5,n6){$\sigma$}

\end{picture}
\caption{Transitions between twin states $q_1,\q_1$ and $q_2,\q_2$.}\label{fig:twin}
\end{minipage}
\end{figure}

Let us shortly formalize  two important and intuitive properties of $\C$ resulting from the construction. 
 
\noindent{\bf Property $P_1$:} Let $w \in \Sigma_{\C}^{\omega} $ be an infinite word containing the symbol $\#$ such that $w= v_1 \# v_2$ where $v_1\in \Sigma_{\C}^{*}$ and $v_2 \in \Sigma_{\C}^{\omega}$. Let $\C^{w}_{0}  \C^{w}_{1} \cdots $ be the outcome of $\C$ on the  word $w$; and $\C^{v_2}_{0}  \C^{v_2}_{1} \cdots$ be the outcome on $v_2$. 
Since $\Post_{\C}(Q_{\C}, \#)=\{q_0,\q_0\}$ with uniform distribution, on inputting the letter $\#$,  the automaton $\C$ is reset to the initial distribution $(\mu_0)_{\C}$, and one may ``forget'' the prefix $v_1 \#$. 
Formally, for all $i \in \nat$ and  $q \in Q_{\C}$,
$$\C^{w}_{\abs{v_1}+1+i}(q)= \C^{v_2}_{i}(q).$$

\noindent {\bf Property $P_2$:} For all words $w \in \Sigma_{\B}^{*}$, the computations of $\A$ and $\C$ on $w$ give

\begin{description}
	\item[a)] $\C^{w}_{\abs{w}}(q)=\C^{w}_{\abs{w}}(\q)=\frac{1}{2}\A^{w}_{\abs{w}}(q)$  for all $q\in Q\setminus\{q_f\}$, 
	\item[b)] $\C^{w}_{\abs{w}}(q_f)=\A^{w}_{\abs{w}}(q_f)=0$. 
\end{description}

Let us recall that  $\dollar, \# \not \in \Sigma_{\B}$. Property $P_2.b$ can easily be proved since $\mu_0(q_f)=(\mu_0)_{\C}(q_f)=0$
and no $\sigma-$labeled transition with $\sigma \in \Sigma_{\B}$ reaches $q_f$. We prove  Property $P_2.a$ by induction on the length of $w$. 

Base ($\abs{w}=0$): Let $w$ be  the empty word $\epsilon$.
By construction, we know the initial distribution for $\A$ is 
$\mu_0(q_0)=1$ and for $\C$ is $(\mu_0)_{\C}(q_0)=(\mu_0)_{\C}(\q_0)=\frac{1}{2}$. 

Induction: Assume that the statement holds for all words $w'$ with $w' <i$. Let $w=w' \sigma$ where $w' \in  \Sigma_{\B}^{i-1}$ and $\sigma\in \Sigma_{\B}$. By definition,  
$ \C^{w}_{i}(q_1) = \sum_{q \in Q_{\C}} \C^{w}_{i-1}(q) \cdot \delta_{\C}(q,\sigma)(q_1)$ for $q_1\in Q\setminus\{q_f\}$. By  induction hypothesis, since $\C^{w'}_{i-1}(q_2)=\C^{w'}_{i-1}(\q_2)$ for $q_2\in Q\setminus\{q_f\}$, and since $\delta_{\C}(q_2,\sigma)(q_1)=\delta_{\C}(\q_2,\sigma)(q_1)$, we conclude  
$ \C^{w}_{i}(q_1) = 2 \sum_{q_2 \in Q} \C^{w}_{i-1}(q_2) \cdot \delta_{\C}(q_2,\sigma)(q_1)=
2 \sum_{q_2 \in Q} (\frac{1}{2} \A^{w}_{i-1}(q_2) )\cdot (\frac{1}{2} \delta(q_2,\sigma)(q_1))=\frac{1}{2} \sum_{q_2 \in Q} \A^{w}_{i-1}(q_2) \cdot \delta(q_2,\sigma)(q_1)=\frac{1}{2} \A^{w}_{i}(q_1)$. Similarly, we obtain $ \C^{w}_{i}(\q_1)=\frac{1}{2}\A^{w}_{i}(q_1)$.

Now, we show that $\LW(\C)\neq \emptyset$ iff the PA $\A$ has value 1.
 
$\Leftarrow$ First, we assume that $\A$ has value 1. So, for all $\epsilon >0$, there exists a finite word $w \in \Sigma^*$ such that $P_{\A}(w)>1-\epsilon$.
Let $w_i$ be such that $P_{\A}(w_i)>1-2^{-i}$ for all $i \geq 1$.
We claim that the infinite word $v=(w_{i}\#)_{i \in \nat}=w_{1} \# w_{2} \# w_{3} \#\cdots$ is a weakly synchronizing word for $\C$. 
Let $\C^{v}_0 \C^{v}_1 \cdots$ be the outcome of $\C$ on~$v$. For $i> 1$, let $v_i=w_{1} \# \cdots \#   w_{i}$ be the prefix of $v$ which ends with $w_{i}$. Let 
 $\abs{v_i}$ denote the length of $v_i$, then 
$\abs{v_i}=(i-1)+\sum_{j=1}^{i} \abs{w_{j}}.$
The last $\#$ of the prefix $v_{i}$  is located at $\abs{v_{i-1}}+1$ (for $i>1$).
Thus,  $\C^{v}_{\abs{v_{i-1}}+1}=(\mu_0)_{\C}$  which is  the initial distribution of $\C$.
By construction, 
 $\C^{v}_{\abs{v_{i-1}}+1+\abs{w_i}}(q_f)= P_{\A}(w_i)$. 
Hence,  $\norm{\C^{v}_{\abs{v_{i}}}}> 1- 2^{-i}$ and $\lim_{i \to \infty}\norm{\C^{v}_{\abs{v_{i}}}}=1$. It implies that $\limsup_{n \to \infty}\norm{\C^{v}_{n}}=1$. Hence  $v$ is a weakly synchronizing word for $\C$, thus $\LW(\C)\neq \emptyset$.

$\Rightarrow$ Now let us assume that $\LW(\C)\neq \emptyset$. 
So, by definition there exists an infinite word $v=\sigma_0 \sigma_1 \sigma_2 \cdots$ such that $\limsup _{n \to \infty}\norm{\C^{v}_n}=1$.
We claim that $v$ contains infinitely many $\#$. 
Towards contradiction, assume that there exists $n \in \nat$ such that  for all $i>n$, $\sigma_i \neq \#$.

 Let $\C^{v}_{0}  \C^{v}_{1} \cdots$ be the outcome of $\C$ on  $v$. 
There are two cases: 
\begin{itemize}
\item There exists $j > n$ such that $\sigma_j=\dollar$. Since $\Post_{\C}(Q_{\C}, \dollar)=\{q_f,q_n,\q_n\}$ and $\Post_{\C}(\{q_f,q_n,\q_n\}, \Sigma )=\{q_n,\q_n\}$, we have $\C^{v}_{j+1+i}(q_n)=\C^{v}_{j+1+i}(\q_n)=\frac{1}{2}$ for all $i \in \nat$, a contradiction with the fact that $v$ is weakly synchronizing. 
\item For all $j >n$ we have $\sigma_j \neq \dollar$. 
Let $k$ be one  position after  the last $\#$ in~$v$, or $k=0$ if there is no $\#$. 
For all $i\in \nat$, let $v_i=\sigma_{k} \sigma_{k+1} \cdots \sigma_{k+i-1}$ be the subword starting at $k$ with length $i$; and let $\C^{v_i}_{0}  \C^{v_i}_{1} \cdots \C^{v_i}_{i}$ be the computation of $\C$  on the finite word $v_i$.
By Properties $P_1$ and $P_2$, $\C^{v_i}_{i}(q)=\C^{v_i}_{i}(\q)=\C^{v}_{k+i}(q)=\C^{v}_{k+i}(\q)$ for all $i\in \nat$ and all $q\in Q\setminus\{q_f\}$ and $\C^{v}_{k+i}(q_f)=0$. 
This gives $\norm{\C^{v}_{n+i}} \leq \frac{1}{2}$, a contradiction. 
\end{itemize}

\indent Now, we claim that $v$ contains infinitely many $\dollar$ too. 
Towards contradiction, assume that there exists $n \in \nat$ such that for all $i>n$, $\sigma_i \neq \dollar$. 
Let  $k_0 k_1 k_2 \cdots$ be the sequence of all positions in $v$ after $n$ where $\sigma_{k_j}=\#$ ($j\in \nat$) in increasing order ($k_{j+1}>k_j$). 
So, $k_j>n$ and $\sigma_i \neq \#$ for all $j \in \nat$ and all $i\neq k_j$. 
Let $v_i=\sigma_{k_j}\sigma_{k_j+1} \cdots \sigma_{i}$ be the finite subword of $v$ between the positions $k_j$ and $i$ where  $k_j\leq i<k_{j+1}$ for some $j\in\nat$. We see that $v_i=\# w$ for some $w\in\Sigma_{\B}^{*}$. 
Let $\C^{v_i}_{0}  \cdots \C^{v_i}_{i-k_j}$ be the computation of $\C$  on $v_i$.
By Properties $P_1$ and $P_2$, $\C^{v_i}_{i-k_j}(q)=\C^{v_i}_{i-k_j}(\q)=\C^{v}_{i}(q)=\C^{v}_{i}(\q)$ for all $i>n$ and all $q\in Q\setminus \{q_f\}$, and also $\C^{v_i}_{i}(q_f)=0$. 
This gives $\norm{\C^{v}_{i}} \leq \frac{1}{2}$ for all $i>n$, a contradiction with the fact that $v$ is weakly synchronizing. 

We showed that  $v$ contains infinitely many $\#$ and $\dollar$. Since $v$ is weakly synchronizing, for all $\epsilon>0$ there exists $m>0$ where $\norm{\C^{v}_m}>1-\epsilon$ (In fact, since $v$ is weakly synchronizing, for all $\epsilon>0$ for all $n>0$ there exists $m>n$ where $\norm{\C^{v}_m}>1-\epsilon$, but we do not need $n$ here).
For fixed $\epsilon < \frac {1}{2}$, let $m>0$ be such that $\norm{\C^{v}_m}>1-\epsilon$. 
Let $k<m$ be one  position after the last $\#$ before $\sigma_m$ (i.e., $\sigma_{k-1}=\#$), or  $k=0$ if there is no $\#$ before $\sigma_m$.
Let $w$ be  the finite subword of $v$ starting from position $k$ to the position $m$. 
By Properties $P_1$ and $P_2$, $\C^{v}_{m-k}(q)=\C^{v}_{m-k}(\q) \leq \frac{1}{2}$ for all $q\in Q\setminus \{q_f\}$. 
Therefore, since 
$\norm{\C^{v}_m}>1-\epsilon$, we have $\C^{v}_m(q_f)>1-\epsilon$ and $P_{\A}(w)>1-\epsilon$. Hence $\A$ has value~1. 
 
\begin{figure}[t]
\begin{center}

\begin{picture}(115,40)(0,2)

\node[Nmarks=n, Nw=33, Nh=24, dash={0.2 0.5}0](pa1)(13,20){}
\node[Nmarks=n, Nw=54, Nh=35, dash={0.4 1}0](pa)(22,20){}
\node[Nframe=n](label)(3,35){PA $\A$}

\node[Nmarks=n, Nw=33, Nh=24, dash={0.2 0.5}0](pa1)(82,20){}
\node[Nmarks=n, Nw=54, Nh=37, dash={0.4 1}0](pa)(91,20){}
\node[Nframe=n](label)(73,35){PA $\C$}

\node[Nframe=n](arrow)(55,20){{\Large $\Rightarrow$}}

\node[Nmarks=i](n3)(1,21){$q_0$}
\node[Nmarks=r,dash={2 0.5}0](n4)(25,26){}
\node[Nmarks=n](n5)(25,14){}
\node[Nmarks=r](qf)(42,26){$q_f$}
\node[Nmarks=n](qn)(42,14){$q_n$} 

\drawloop[ELside=l,loopCW=y, loopangle=-90, loopdiam=4](qn){$\Sigma$}
\drawedge(qf,qn){$\Sigma$}
\drawedge(n4,qf){$\dollar$}
\drawedge(n5,qn){$\dollar$}

\node[Nmarks=i, Nw=5, Nh=9](n3a)(70,20){$\ell_0$}
\node[Nmarks=r, Nw=5, Nh=9, dash={2 0.5}0](n4a)(92,26){}
\node[Nmarks=n, Nw=5, Nh=9](n5a)(92,14){}
\node[Nmarks=r](qfa)(108,26){$q_f$}
\node[Nmarks=n, Nw=5, Nh=9](qna)(108,14){$\ell_n$} 

\drawloop[ELside=l,loopCW=y, loopangle=-90, loopdiam=4](qna){$\Sigma$}
\drawedge(qfa,qna){$\Sigma$}
\drawedge(n4a,qfa){$\dollar$}
\drawedge(n5a,qna){$\dollar$}

\drawloop[ELside=l,loopCW=y, loopangle=90, loopdiam=4](n3a){$\#$}
\drawedge[ELpos=50, ELside=l, curvedepth=-2](n4a,n3a){$\#$}
\drawedge[ELpos=50, ELside=l, curvedepth=2](n5a,n3a){$\#$}
\drawedge[ELpos=50, ELside=l, curvedepth=-10](qfa,n3a){$\#$}
\drawedge[ELpos=50, ELside=l, curvedepth=10](qna,n3a){$\#$}
\end{picture}
\caption{The construction of the PA $\C$ from $\A$}\label{fig:undec}
\end{center}
\end{figure}
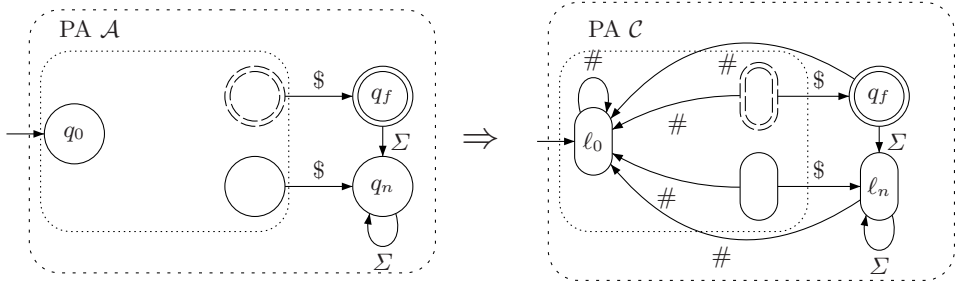
\end{proof}

\bibliographystyle{splncs03}
\bibliography{biblio}
\end{document}